\newcommand{\white}{\color{white}}
\newcommand{\I}{\mathcal{I}}
\newcommand{\J}{\mathcal{J}}
\newcommand{\K}{\mathcal{K}}
\newcommand{\wb}{\bold{w}}
\newcommand{\Wb}{\bold{W}}
\newcommand{\xb}{\bold{x}}
\newcommand{\Xb}{\bold{X}}
\newcommand{\yb}{\bold{y}}
\newcommand{\Gb}{\bold{G}}
\newcommand{\Tb}{\bold{T}}
\newcommand{\D}{\mathcal{D}}
\newcommand{\Scal}{\mathcal{S}}
\newcommand{\St}{\bar{\mathcal{S}}}
\newcommand{\Z}{\mathbb{Z}}
\newcommand{\R}{\mathbb{R}}
\newcommand{\C}{\mathbb{C}}
\newcommand{\gb}{\bold{g}}
\newcommand{\Sb}{\bold{S}}
\newcommand{\Sbt}{\tilde{\bold{S}}}
\newcommand{\Sbh}{\hat{\bold{S}}}
\newcommand{\Bb}{\bold{B}}
\newcommand{\ab}{\bold{a}}
\newtheorem{Thm}{Theorem}[section]
\newtheorem{Cor}[Thm]{Corollary}
\newcommand{\ind}{\text{\color{white}.$\quad$}}
\title{Weighted Gradient Coding with Leverage Score Sampling}
\name{$\text{Neophytos Charalambides}^{\dagger}, \text{Mert Pilanci}^{\ddagger}, \text{ and Alfred O. Hero III}^{\dagger},$}
\address{$\text{\white .}^\dagger$EECS Department University of Michigan, $\text{\white .}^\ddagger$EE Department Stanford University}
\begin{document}
%
\maketitle
\begin{abstract}
A major hurdle in machine learning is scalability to massive datasets. Approaches to overcome this hurdle include compression of the data matrix and distributing the computations. \textit{Leverage score sampling} provides a compressed approximation of a data matrix using an importance weighted subset. \textit{Gradient coding} has been recently proposed in distributed optimization to compute the gradient using multiple unreliable worker nodes. By designing coding matrices, gradient coded computations can be made resilient to stragglers, which are nodes in a distributed network that degrade system performance. We present a novel \textit{weighted leverage score} approach, that achieves improved performance for distributed gradient coding by utilizing an importance sampling.
\end{abstract}

\begin{keywords}
Reed-Solomon codes, distributed gradient descent, linear regression, sketching, straggler mitigation.
\end{keywords}

\section{Introduction}
\label{intro}

In modern day machine learning the \textit{curse of dimensionality} has been a major impediment to solving large scale optimization problems. Gradient methods are widely used in solving such problems, but computing the gradient is often cumbersome. A method for speeding up the gradient computation is by performing the necessary computations in a distributed manner, where a network of workers perform certain subtasks in parallel. A common issue which arises are stragglers: workers whose task may never be received, due to delay or outage. These straggler failures translate to erasures in coding theory. The authors of \cite{TLDK17} proposed \textit{gradient coding}, a scheme for exact recovery of the gradient when the objective loss function is additively separable. The exact recovery of the gradient is considered in several prior works, e.g., \cite{TLDK17,HASH17,OGU19,YA18,CMH20,RTTD17}, while gradient coding for approximate recovery of the gradient is studied in \cite{CP18,RTTD17,CPE17,WCP19,BWE19,WLS19,KKR19,HYKM19}. Gradient coding requires the central server to receive the subtasks of a fixed fraction of any of the workers. We obtain an extension based on balanced Reed-Solomon codes \cite{HLH16,HASH17}, introducing \textit{weighted gradient coding}, where the central server recovers a weighted sum of the partial gradients of the loss function.\\
$\ind$ Our extension also includes dimensionality reduction via \textit{sketching}, in which a matrix is sampled by row selection based on leverage scores. We adapt the leverage scores subsampling algorithm from \cite{MMY15,Wan15} to incorporate weights, which we refer to as \textit{weighted leverage score sampling}.\\
$\ind$ The proposed approach merges weighted gradient coding with weighted leverage score sampling, and significantly benefits from both techniques. The introduction of weights allows for further ``compression'' in our data matrix when the leverage scores are non-uniform. Our experiments show that the proposed approach succeeds in reducing the number of iterations of gradient descent; without significant sacrifice in performance.\\
$\ind$ The paper is organized as follows. In section \ref{str_problem_GC} we describe the ``straggler problem'' in gradient coding \cite{TLDK17}. In section \ref{WLSS} we introduce leverage score sampling, and modify existing dimensionality reduction techniques by introducing weights. In \ref{WGC} we present the weighted gradient coding scheme, which we combine with weighted leverage score sampling. In \ref{eq_gradients} we show equivalence of the gradients computed using our proposed scheme and those obtained by applying the leverage score sketching matrix \cite{Wan15}. Finally, in \ref{experiments} we present experimental results. The contributions of this paper are:
\begin{itemize}[noitemsep,nolistsep]
  \item Introduction of a \textit{weighted} gradient coding scheme, that is robust to stragglers.
  \item Incorporation of leverage score sampling into weighted gradient coding.
  \item Theory showing that perfect gradient recovery of \textit{leverage score sketching} occurs, under mild assumptions.
  \item Presentation of experiments that corroborate our theoretical results.
\end{itemize}

\section{Stragglers and Gradient Coding}
\label{str_problem_GC}

\subsection{Straggler Problem}
\label{str_problem}

Consider a single central server node that has at its disposal a dataset $\D=\left\{(\xb_i,y_i)\right\}_{i=1}^N\subsetneq \R^p\times\R$ of $N$ samples, where $\xb_i$ represents the features and $y_i$ the label of the $i^{th}$ sample. The central server can distribute the dataset $\D$ among $n$ workers in order to solve the optimization problem
\begin{equation}
\label{th_star_pr}
  \theta^{\star} = \arg\min_{\theta\in\R^p}\left\{ \sum_{i=1}^N \ell(\xb_i,y_i;\theta) \right\}
\end{equation}
in an accelerated manner, where $L(\D;\theta)=\sum_{i=1}^N \ell(\xb_i,y_i;\theta)$ is a predetermined loss-function. The objective function in (\ref{th_star_pr}) can also include a regularizer $\lambda R(\theta)$ if necessary. A common approach to solving (\ref{th_star_pr}), is to employ gradient descent. Even if closed form solutions exist for (\ref{th_star_pr}), gradient descent can still be advantageous for large $N$.

The central server is capable of distributing the dataset appropriately, with a certain level of redundancy, in order to recover the gradient based on the full $\D$. As a first step we partition the $\D$ into $k$ disjoint parts $\{\D_j\}_{j=1}^k$ each of size $N/k$. The quantity
$$ g=\nabla_{\theta}L(\D;\theta)=\sum_{j=1}^k\nabla_{\theta}\ell(\D_j;\theta)=\sum_{j=1}^k g_j $$
is the gradient. We call the summands $g_j\coloneqq\nabla_{\theta}\ell(\D_j;\theta)$ \textit{partial gradients}.

In the distributed setting each worker node completes its task of sending back a linear combination of its assigned partial gradients. There can be different types of failures that can occur during the computation or the communication process. These failures are what we refer to as \textit{stragglers}, that are ignored by the main server: specifically, the server only receives $f\coloneqq n-s$ completed tasks ($s$ is the number of stragglers our scheme can tolerate). We denote by $\I\subsetneq[n]\coloneqq\{1,\cdots,n\}$ the index set of the $f$ fastest workers who complete their task in time. Once \textit{any} set of $f$ tasks is received, the central server should be able to decode the received encoded partial gradients, and therefore recover the full gradient $g$.

\subsection{Gradient Coding}
\label{GC}

Gradient coding is a procedure comprised of an encoding matrix $\Bb\in\C^{n\times k}$, and a decoding vector $\ab_{\I}\in\C^f$; determined by $\I$. Each row of $\Bb$ corresponds to an encoding vector with support $w$, and each column corresponds to a data partition $\D_j$. The columns are  with support $d$, so every partition is sent to $d$ workers. By $\Bb_{\I}\in\C^{f\times k}$ we denote the submatrix of $\Bb$ consisting only of the rows indexed by $\I$. The entry $\Bb_{ij}$ is nonzero if and only if, $\D_j$ was assigned to the $i^{th}$ worker. Furthermore, we have $n\cdot w=k\cdot d$; where $s=d-1$.

Each worker node is assigned a number of gradients from the partition, indexed by $\J_i\subsetneq[k]$. The worker is tasked to compute an encoded version of the partial gradients $g_j\in\R^p$ corresponding to its assignments. Denote by
$$ \gb \coloneqq {\begin{pmatrix} | & | & & | \\ g_1 & g_2 & \hdots & g_k \\ | & | & & | \end{pmatrix}}^T \in \R^{k\times p} $$
the matrix whose rows constitute the transposes of the partial gradients, and the received encoded gradients are the rows of $\Bb_{\I}\gb$. The full gradient of the objective (\ref{th_star_pr}) on $\D$ should be recoverable by applying $\ab_{\I}$
$$ g^T=\ab_{\I}^T(\Bb_{\I}\gb)=\bold{1}_{1\times k}\gb= \sum_{j=1}^kg_j^{T}. $$
Hence the encoding matrix $\Bb_{\I}$ must satisfy $\ab_{\I}^T\Bb_{\I}=\bold{1}_{1\times k}$ for \textit{all} of the ${{n}\choose{s}}$ possible index sets $\I$.

Every partition will be sent to an equal number of servers $d$, and each server will receive a total of $w$ distinct partitions. In section \ref{WGC}, we introduce a procedure for recovering a weighted sum of partial gradients, i.e. $\tilde{g}=\sum_{i=1}^k\wb_ig_i$. Once the gradient has been recovered, the central server can perform an iteration of its gradient based algorithm. 

\section{Weighted Leverage Score Sampling}
\label{WLSS}

\subsection{Leverage Score Sketching}
\label{lvg_score_sk}

Consider a data matrix $\Xb\in\R^{N\times p}$ for $N>p$, which is to be compressed through an operation which preserves the objective function in (\ref{th_star_pr}). Consider also
$$ \theta^{\star}_{ols} = \arg\min_{\theta\in\R^p} \left\{ \|\Xb\theta-\yb\|_2^2 \right\}. $$
for $\Xb\in\R^{N\times p}$ and $\yb\in\R^N$. In the case of the over-constrained least squares problem; i.e. $N\gg p$, the dimension $N$ is to be reduced to $r>p$, which corresponds to the number of constraints. Define the \textit{sketching matrix} $\Sb\in\R^{r\times N}$ and consider the modified problem
\begin{equation}
\label{wls_formulation}
  \tilde{\theta}_{ols} = \arg\min_{\theta\in\R^p} \left\{ \|\Sb(\Xb\theta-\yb)\|_2^2 \right\}.
\end{equation}

The leverage scores $\{\ell_i\}_{i=1}^N$ of matrix $\Xb$ are defined as the diagonal entries of the projection $P_\Xb=\Xb\Xb^{\dagger}$. Specifically $\ell_i=(P_{\Xb})_{ii}$. An equivalent definition of $\{\ell_i\}_{i=1}^N$ is via the reduced left singular vectors matrix $U\in\R^{N\times p}$, yielding $\ell_i=\|U_{(i)}\|_2^2=\|U_{(i)}\|_F^2$, for $U_{(i)}$ the $i^{th}$ row of $U$. A distribution is defined over the rows of $\Xb$ by normalizing these scores, where each row of $\Xb$ has respective probability of being sampled $\pi_i=\ell_i/\sum_{j=1}^N\ell_j=\ell_i/\|U\|_F^2$. Other methods for approximating the leverage scores are available \cite{DMMW12,RCCR18,SS11}, that do not require directly computing the singular vectors. We also note that a multitude of other efficient constructions of sketching matrices and iterative sketching algorithms have been studied in the literature \cite{pilanci2016fast,PilWai14a,PilWai14b,pilanci2017newton}.

The \textit{leverage score sketching matrix} \cite{MMY15} is comprised of two matrices, a sampling matrix $S_{\Xb}\in\{0,1\}^{N\times r}$ and a rescaling matrix $D\in\R^{r\times r}$. The sketching matrix $\Sbt$ is constructed in two steps:
\begin{enumerate}[noitemsep]
  \item randomly sample with replacement $r>p$ rows from $\Xb$, based on $\{\pi_i\}_{i=1}^N$
  \item rescale each sampled row by $\frac{1}{\sqrt{r\pi_i}}$
\end{enumerate}
for which $\Sbt=D\cdot S_{\Xb}^T$. The modified least squares problem is then solved to obtain an approximate solution $\tilde{\theta}_{ols}$.

\subsection{Weighted Leverage Score Sampling}
\label{WLSS_subsection}

The sampling procedure described in \ref{WLSS} reduces the size of the dataset. We define the \textit{compression factor} as $\rho\coloneqq\frac{N}{r}\in\Z_+$, i.e. $r|N$, and we require that $k|r$. Different from section \ref{GC} where $k$ partitions were considered, here and in the sequel we consider $K=\rho\cdot k$ equipotent partitions $\{\D_i\}_{i=1}^K$ of size $N/K$, out of which we will draw $k$ \textit{distinct} data partitions. The sampling distribution is defined next.

The leverage score of each sampled partition is defined as the sum of the normalized leverage scores of the samples comprising it. That is; $\Pi_i\coloneqq\sum_{j:x_j\in \D_i}\pi_j$. In a similar manner to the definition of the leverage scores in \ref{lvg_score_sk} $\Pi_i=\|U_{(\K_i)}\|_F^2/\|U\|_F^2$, where $U_{(\K_i)}$ denotes the submatrix consisting only of the rows corresponding to the elements in $\D_i$. The proposed compression matrix $\Sb_\text{p}$ is determined as follows:
\begin{enumerate}[noitemsep]
  \item randomly sample with replacement $k$ parts in the partition of $\{\D_i\}_{i=1}^K$, based on $\{\Pi_i\}_{i=1}^K$
  \item retain each sampled part in the partition only once, and count how many times each part was drawn --- these counts correspond to entries in the \textit{weight vector} $\wb$
  \item rescale each part in the partition by $\frac{1}{\sqrt{r\Pi_i}}$.
\end{enumerate}
For the sampling matrix first construct $S_\text{part}\in\{0,1\}^{K\times k}$, which has a single $1$ entry in every column, corresponding to the distinct data partitions drawn, and then assign $S_{\Xb_\text{p}} = S_\text{part} \otimes \bold{I}_{N/K} \in \{0,1\}^{N\times r}$. For the rescaling matrix define $D_\text{part}\in\R^{k\times k}$ with diagonal entries $1/\sqrt{r\Pi_j}$ based on the $k$ drawn parts of the partition, and then $D_\text{p} = D_\text{part} \otimes \bold{I}_{N/K} \in \R^{r\times r}$. The final compression matrix is $\Sb_\text{p} \coloneqq D_\text{p}\cdot S_{\Xb_\text{p}}^T$.

Note that $S_\text{part}$ has no repeated columns. Note also that in the proposed distributed computing framework, we do not directly multiply by $\Sb_{\text{p}}$, but simply retain the sampled parts of the partition and rescale them before distributing the data.

\section{Weighted Gradient Coding}
\label{WGC}

The objective is to recover the weighted sum of the partial gradients $\tilde{g}$, as depicted in Figure 1, subject to at most $s$ erasures. This may be achieved by extending the construction proposed in \cite{HASH17}. The main idea in \cite{HASH17} is to use balanced Reed-Solomon codes \cite{HLH16}, which are evaluation polynomial codes. Each column of the encoding matrix $\Bb$ corresponds to a partition $\D_i$ and is associated with a polynomial that evaluates to zero at the respective workers who have not been assigned that partition part. For more details the reader is referred to \cite{HASH17}.

\begin{Thm}
  Let  $\Bb$ and $\ab_{\I}$ be an encoding matrix and decoding vector from \cite{HASH17}, satisfying $\ab_{\I}^T\Bb_{\I}=\bold{1}_{1\times k}$ for any $\I$. Let $\tilde{\Bb}\coloneqq\Bb\cdot\textit{diag}(\wb)$ for any $\wb\in\C^{1\times k}$. Then $\ab_{\I}^T\tilde{\Bb}_{\I}=\wb$.
\end{Thm}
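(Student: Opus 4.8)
The plan is to exploit the fact that right-multiplication by a diagonal matrix acts column-wise, whereas the row-restriction to the index set $\I$ acts row-wise, so that the two operations commute. First I would record the structural identity
$$ \tilde{\Bb}_{\I}=\big(\Bb\cdot\textit{diag}(\wb)\big)_{\I}=\Bb_{\I}\cdot\textit{diag}(\wb), $$
which is the only nontrivial observation needed. It holds because the $(i,j)$ entry of $\Bb\cdot\textit{diag}(\wb)$ equals $\Bb_{ij}\wb_j$, so selecting the rows $i\in\I$ leaves the per-column scaling by $\wb_j$ entirely untouched; equivalently, extracting rows and scaling columns are independent operations that may be performed in either order.

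With this identity in hand the computation is immediate. Substituting and then invoking the hypothesis $\ab_{\I}^T\Bb_{\I}=\bold{1}_{1\times k}$ from \cite{HASH17} gives
$$ \ab_{\I}^T\tilde{\Bb}_{\I}=\ab_{\I}^T\big(\Bb_{\I}\cdot\textit{diag}(\wb)\big)=\big(\ab_{\I}^T\Bb_{\I}\big)\cdot\textit{diag}(\wb)=\bold{1}_{1\times k}\cdot\textit{diag}(\wb). $$
The last step I would finish by noting that multiplying the all-ones row vector $\bold{1}_{1\times k}$ on the right by $\textit{diag}(\wb)$ merely reads off the diagonal, so its $j$-th coordinate is $1\cdot\wb_j=\wb_j$, whence $\bold{1}_{1\times k}\cdot\textit{diag}(\wb)=\wb$. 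Chaining these equalities establishes $\ab_{\I}^T\tilde{\Bb}_{\I}=\wb$.

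Because the argument uses only associativity of matrix multiplication together with the commutation of row-selection and column-scaling, there is no genuine obstacle to overcome; the conclusion holds for \emph{every} admissible $\I$ precisely because the unweighted decoding guarantee of \cite{HASH17} is assumed for every such $\I$. The single point I would be careful to spell out is the identity $\tilde{\Bb}_{\I}=\Bb_{\I}\cdot\textit{diag}(\wb)$, since the statement is phrased in terms of the restricted matrix $\tilde{\Bb}_{\I}$ rather than the restriction of the weighted matrix; once that commutation is made explicit, the weighted recovery property is a direct corollary of the original decoding condition.
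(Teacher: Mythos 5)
Your proof is correct, and it takes a genuinely different---and in fact more general---route than the paper's. The paper's proof opens up the structure of the balanced Reed--Solomon construction of \cite{HASH17}: it writes $\Bb_{\I}=\Gb_{\I}\Tb$ with $\Gb_{\I}$ a Vandermonde matrix over the $n$-th roots of unity and $\Tb_{(1)}=\bold{1}_{1\times k}$, uses that $\ab_{\I}^T$ is the first row of $\Gb_{\I}^{-1}$ so that $\ab_{\I}^T\Gb_{\I}=\bold{e}_1^T$, and then reads off $\ab_{\I}^T\tilde{\Bb}_{\I}=\bold{e}_1^T\big(\Tb\cdot\textit{diag}(\wb)\big)=\wb$. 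You never open that black box: you use only the stated hypothesis $\ab_{\I}^T\Bb_{\I}=\bold{1}_{1\times k}$ together with the commutation of row-restriction and column-scaling, $\tilde{\Bb}_{\I}=\Bb_{\I}\cdot\textit{diag}(\wb)$---an identity the paper's proof also relies on, silently, when it replaces $\tilde{\Bb}_{\I}$ by $\Gb_{\I}\tilde{\Tb}$. What your argument buys is generality and economy: the conclusion holds for \emph{any} exact gradient coding scheme, i.e.\ any pair $(\Bb,\ab_{\I})$ satisfying the decoding identity for all admissible $\I$ (for instance the constructions of \cite{TLDK17}), not only the Reed--Solomon codes of \cite{HASH17}, and it exposes the theorem as a formal corollary of the decoding condition rather than a property of a particular code. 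What the paper's route buys is insight into \emph{why} that hypothesis holds for its chosen construction and how the weights propagate into the polynomial coefficients (namely $\tilde{\Tb}_{(1)}=\wb$), which is useful context for the surrounding construction but is not logically needed once the decoding identity is assumed, as the theorem statement indeed assumes it.
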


\begin{proof}
The properties of balanced Reed-Solomon codes imply the decomposition $\Bb_{\I}=\Gb_{\I}\Tb$, where $\Gb_{\I}$ is a Vandermonde matrix over the subgroup $U_n=\{a\in\C:a^n=1\}$ of the circle group, and the entries of $\Tb$ corresponds to the coefficients of polynomials; constructed such that their constant term is $1$, i.e. $\Tb_{(1)}=\bold{1}_{1\times k}$. The vector $\ab_{\I}^T$ is the first row of $\Gb_{\I}^{-1}$, for which $\ab_{\I}^T\Gb_{\I}=\bold{e}_1^T$; the first standard basis vector.

The matrix $\tilde{\Tb}=\Tb\cdot\textit{diag}(\wb)$ is equal to $\Tb$ with its columns each scaled by the respective entry of $\wb$, thus $\tilde{\Tb}_{(1)}=\wb$. A direct consequence of this is that $\ab_{\I}^T\tilde{\Bb}_{\I}=\bold{e}_1^T\tilde{\Tb}=\tilde{\Tb}_{(1)}=\wb$, which completes the proof.
\end{proof}

\begin{figure}
  \centering
  \label{wted_recovery}
    \includegraphics[scale=.15]{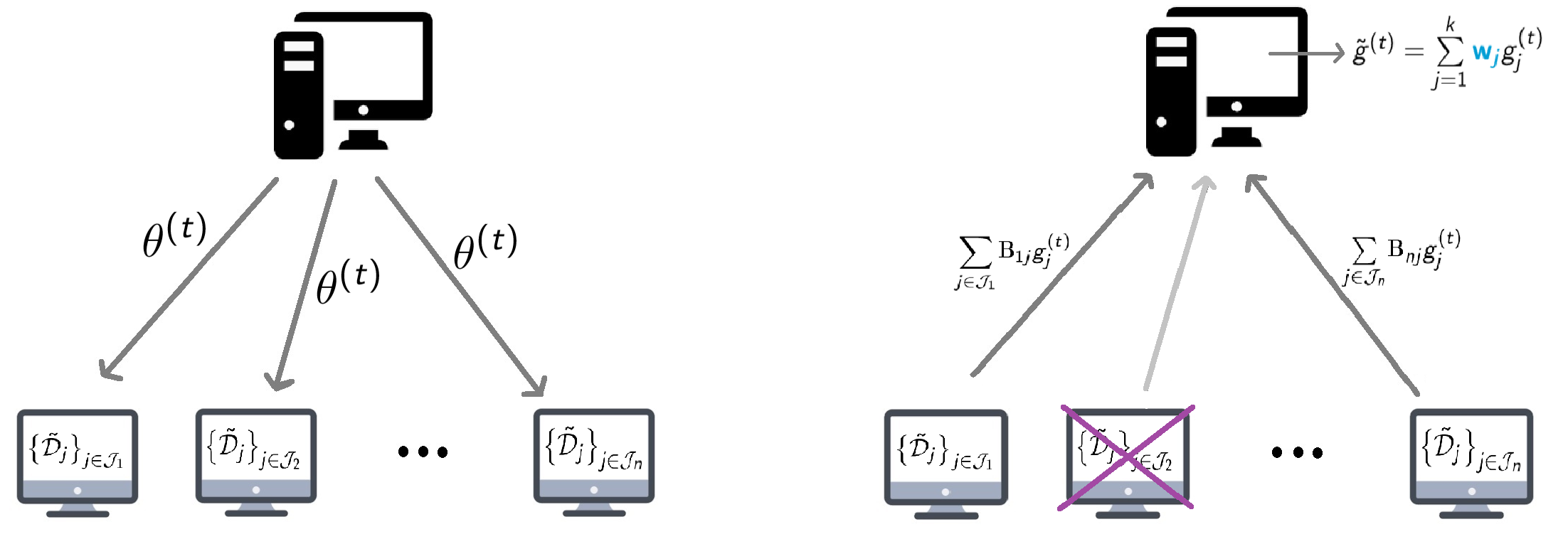}
    \caption{Schematic representation of communication, with the recovery of $\tilde{g}$ at iteration $t$ of gradient descent.}
\end{figure}

When combined with the leverage score sampling scheme, the expected time for computing the weighted gradient per iteration reduces by a factor of $\rho$.

It is worth noting that weighted gradient coding has applications in other settings, e.g. if the partitions are drawn from noisy sources of different variance; one could select $\wb$ based on the estimated noise variances, obtaining improved gradient resiliency. This also directly relates to scenarios where heteroskedastic data is considered.

\section{Equivalence of Gradients}
\label{eq_gradients}

In this section we show that the proposed weighted scheme will satisfy the same properties as the matrix $\Sbt$ from section \ref{WLSS}, when gradient based methods are used to solve (\ref{wls_formulation}). This is a consequence of theorem \ref{eq_grad_thm}, applied to the least squares problem. The main reason for this equivalence property is that the weighted gradient $\tilde{g}$ matches the gradient when the leverage score sketching matrix is applied.

The pre-processing in section \ref{WLSS_subsection} which takes place on the data matrix can be accomplished by using another compression matrix $\Sbh$. This matrix is defined as
$$ \Sbh\coloneqq\sqrt{\Wb}\cdot\left(D_{\text{p}}\cdot S_{\Xb_{\text{p}}}^T\right) = \sqrt{\Wb}\cdot\Sb_{\text{p}} \in \R^{r\times N} $$
for $\Wb=\textit{diag}(\wb)\otimes \bold{I}_{N/K}\in\R^{r\times r}$. For the objective function of the optimization problem (\ref{wls_formulation})
\begin{equation}
\label{loss_sketched}
  L_{\text{S}}(\Sb,\Xb,\yb;\theta) \coloneqq  \sum\limits_{i=1}^r\big((\Sb\Xb\theta)_i-(\Sb\yb)_i\big)^2
\end{equation}
we have the following.

\begin{Thm}
\label{eq_grad_thm}
  Let $\D_i=\{\xb_i\}$ for all $i\in[N]$ and $\sum_{i=1}^k\wb_i$ be the total number of random draws used to construct $\Sbt$ and $\Sbh$. For $L_{\mathrm{S}}$ as specified by (\ref{loss_sketched}):
  \begin{equation}
  \label{grad_equiv}
    \nabla_{\theta}L_{\mathrm{S}}(\Sbt,\D;\theta) = \nabla_{\theta}L_{\mathrm{S}}(\Sbh,\D;\theta)
  \end{equation}
  under any permutation of the rows of $\Sbt$ or $\Sbh$.
\end{Thm}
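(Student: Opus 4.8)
The plan is to observe that the gradient of $L_{\mathrm{S}}$ sees the sketching matrix only through its Gram matrix $\Sb^T\Sb$, and then to show that $\Sbt$ and $\Sbh$ produce the \emph{same} Gram matrix. First I would differentiate (\ref{loss_sketched}). Writing the objective as $L_{\mathrm{S}}(\Sb,\Xb,\yb;\theta)=\|\Sb(\Xb\theta-\yb)\|_2^2$, the gradient is
$$ \nabla_{\theta}L_{\mathrm{S}}(\Sb,\D;\theta)=2\,\Xb^T\Sb^T\Sb(\Xb\theta-\yb), $$
so $\Sb$ enters exclusively via $\Sb^T\Sb$. Hence (\ref{grad_equiv}) reduces to the single identity $\Sbt^T\Sbt=\Sbh^T\Sbh$. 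This factorization simultaneously disposes of the permutation claim: permuting the rows of $\Sb$ amounts to $\Sb\mapsto P\Sb$ for a permutation matrix $P$, and $(P\Sb)^T(P\Sb)=\Sb^T P^TP\Sb=\Sb^T\Sb$ is unchanged.

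Next I would expand both Gram matrices as sums of rank-one terms. Let $m\coloneqq\sum_{i=1}^k\wb_i$ denote the common number of draws, and let $j_1,\dots,j_k$ be the distinct sampled indices with multiplicities $\wb_1,\dots,\wb_k$. Because each $\D_i=\{\xb_i\}$ is a singleton, the partition scores collapse to the row scores, $\Pi_i=\pi_i$, and the Kronecker factor $\mathbf{I}_{N/K}=\mathbf{I}_1$ is trivial. The matrix $\Sbt$ lists every draw as its own rescaled row $\tfrac{1}{\sqrt{m\,\pi}}\mathbf{e}^T$, so collecting coincident rows yields
$$ \Sbt^T\Sbt=\sum_{l=1}^{k}\frac{\wb_l}{m\,\pi_{j_l}}\,\mathbf{e}_{j_l}\mathbf{e}_{j_l}^T, $$
where the factor $\wb_l$ records the $\wb_l$ identical copies. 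For $\Sbh=\sqrt{\Wb}\,\Sb_{\mathrm{p}}$, each distinct index instead contributes a single row $\sqrt{\wb_l}\cdot\tfrac{1}{\sqrt{m\,\pi_{j_l}}}\mathbf{e}_{j_l}^T$, whose squared scaling is exactly $\wb_l/(m\,\pi_{j_l})$, giving the identical sum. Equating the two establishes $\Sbt^T\Sbt=\Sbh^T\Sbh$ and therefore (\ref{grad_equiv}).

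The one delicate point, and the crux of the argument, is the bookkeeping of the rescaling normalizer: I must verify that the ``$r$'' appearing in the factor $1/\sqrt{r\pi_i}$ used to build $\Sbt$ and in $1/\sqrt{r\Pi_i}$ used to build $\Sbh$ denotes the \emph{same} quantity, namely the total number of draws $m=\sum_{i=1}^k\wb_i$. Only then does squaring the extra $\sqrt{\wb_l}$ carried by $\Sbh$ reproduce precisely the multiplicity generated by the repeated rows of $\Sbt$. Once this normalization is aligned and the singleton reduction $\Pi_i=\pi_i$ is invoked, the two rank-one expansions coincide term by term and the equivalence of gradients follows, uniformly over all row orderings by the permutation invariance noted above.
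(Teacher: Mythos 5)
Your proposal is correct and follows essentially the same route as the paper's proof: both reduce the gradient identity to the Gram-matrix identity $\Sbt^T\Sbt=\Sbh^T\Sbh$, expand it as a sum over sampled indices, and match the multiplicity $\wb_l$ of repeated rows in $\Sbt$ against the squared $\sqrt{\wb_l}$ scaling in $\Sbh$, with the alignment of normalizers resting on the hypothesis $r=\sum_{i=1}^k\wb_i$ exactly as in the paper. Your explicit treatment of permutation invariance via $(P\Sb)^T(P\Sb)=\Sb^T\Sb$ is a small welcome addition that the paper leaves implicit.
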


\begin{proof}
Denote the index list of sampled parts of the partition by $\Scal$, and their index set by $\St\subseteq[N]$, i.e. $\Scal$ has elements with multiplicity equal to $\wb_i$ and every element of $\St$ is distinct. By assumption $r=\sum_{i=1}^k\wb_i$, and $|\St|=k\leq|\Scal|=r$. Further note that $\Sbh^T\Sbh=\Sb_\text{p}^T\Wb\Sb_\text{p}$, and for the loss function (\ref{grad_equiv})

\begin{align*}
  \nabla_{\theta}L_{\text{S}}(\Sbt,\D;\theta) &= 2\Xb^T\left(\Sbt^T\Sbt\right)\left(\Xb\theta-\yb\right)\\
  &= 2\sum\limits_{l\in\Scal}\xb_l \cdot D_{ll}^2 \cdot  \left(\xb_l^T\theta-y_l\right)\\
  &= 2\sum\limits_{j\in\St}\wb_j\xb_j \cdot (D_{\text{p}})_{jj}^2 \cdot \left(\xb_j^T\theta-y_j\right)\\
  &= 2\sum\limits_{j\in\St}\xb_j \cdot \left(\sqrt{\wb_j}\cdot(D_{\text{p}})_{jj}\right)^2 \cdot \left(\xb_j^T\theta-y_j\right)\\
  &= 2\sum\limits_{j\in\St}\xb_j \cdot \left(\sqrt{\Wb}\cdot D_{\text{p}}\right)_{jj}^2 \cdot \left(\xb_j^T\theta-y_j\right)\\
  &= 2\Xb^T\left(\Sbh^T\Sbh\right)\left(\Xb\theta-\yb\right)\\
  &= \nabla_{\theta}L_{\text{S}}(\Sbh,\D;\theta)
\end{align*}

completing the proof.
\end{proof}

\begin{Cor}
\label{eq_grad_cor}
  If $\Sbt$ and $\Sbh$ were to be constructed by sampling partitions of more than one elements based on $\{\Pi_i\}_{i=1}^K$, conclusion (\ref{grad_equiv}) of theorem \ref{eq_grad_thm} remains valid.
\end{Cor}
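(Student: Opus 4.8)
The plan is to rerun the chain of equalities in the proof of Theorem~\ref{eq_grad_thm}, but now letting the index list $\Scal$ and index set $\St$ range over individual samples $\xb_j$ that belong to the sampled partitions, rather than over single-element partitions. The one fact that makes this work is the Kronecker structure of the compression: since $S_{\Xb_\text{p}}=S_\text{part}\otimes\bold{I}_{N/K}$, $D_\text{p}=D_\text{part}\otimes\bold{I}_{N/K}$, and $\Wb=\textit{diag}(\wb)\otimes\bold{I}_{N/K}$, both the rescaling $1/\sqrt{r\Pi_i}$ and the weight $\wb_i$ are \emph{constant across the $N/K$ rows} of the block corresponding to $\D_i$. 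Thus the per-row quantities appearing in the Theorem are simply inherited, block-wise, from their partition-level values, and the single-element argument lifts without modification.

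First I would expand $\nabla_{\theta}L_{\text{S}}(\Sbt,\D;\theta)=2\Xb^T(\Sbt^T\Sbt)(\Xb\theta-\yb)$ and note that $\Sbt^T\Sbt=S_{\Xb_\text{p}}D_\text{p}^2 S_{\Xb_\text{p}}^T$ is diagonal with block-constant diagonal, so the gradient becomes a double sum: an outer sum over the sampled partitions (taken with their sampling multiplicity) and an inner sum over the $N/K$ elements $\xb_j$ within each. Collecting repeated partitions into the distinct index set $\St$ then yields, for each retained partition $\D_i$, a factor $\wb_i$ multiplying the within-partition contribution $\sum_{j:\xb_j\in\D_i}\xb_j (D_\text{p})_{jj}^2(\xb_j^T\theta-y_j)$, where $(D_\text{p})_{jj}^2=1/(r\Pi_i)$ is common to the block. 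Folding $\wb_i=(\sqrt{\wb_i})^2$ into the scaling exactly as in the Theorem gives $\big(\sqrt{\Wb}\,D_\text{p}\big)_{jj}^2$ for every row $j$ of that block, and recombining the double sum into matrix form produces $2\Xb^T(\Sbh^T\Sbh)(\Xb\theta-\yb)=\nabla_{\theta}L_{\text{S}}(\Sbh,\D;\theta)$.

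The step that needs the most care --- though it is bookkeeping rather than a genuine obstacle --- is verifying that the per-element identity $\wb_i(D_\text{p})_{jj}^2=(\sqrt{\Wb}\,D_\text{p})_{jj}^2$ holds \emph{for each} row $j$ in partition $\D_i$, and not merely at the partition level. This is immediate from the Kronecker products: all $N/K$ diagonal entries of the $i$-th block of $\Wb$ equal $\wb_i$, and all $N/K$ diagonal entries of the corresponding block of $D_\text{p}$ equal $1/\sqrt{r\Pi_i}$, so the scalar identity used in the single-element proof applies unchanged to every row of the block. Finally, permutation invariance of (\ref{grad_equiv}) is retained because permuting the rows of $\Sbt$ or $\Sbh$ acts only on the $r$-dimensional sampling axis and leaves the Gram matrices $\Sbt^T\Sbt$ and $\Sbh^T\Sbh$ --- hence both gradients --- unchanged.
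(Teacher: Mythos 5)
Your proposal is correct, and it takes the same route the paper intends: the paper states Corollary~\ref{eq_grad_cor} without a separate proof, treating it as an immediate block-wise extension of the chain of equalities in Theorem~\ref{eq_grad_thm}, which is exactly what you carry out by exploiting the Kronecker structure ($S_{\Xb_\text{p}}=S_\text{part}\otimes\bold{I}_{N/K}$, $D_\text{p}=D_\text{part}\otimes\bold{I}_{N/K}$, $\Wb=\textit{diag}(\wb)\otimes\bold{I}_{N/K}$) to make the rescaling and weights constant on each block. Your verification that $\wb_i(D_\text{p})_{jj}^2=(\sqrt{\Wb}\,D_\text{p})_{jj}^2$ holds row-by-row within each block, together with the observation that row permutations leave the Gram matrices $\Sbt^T\Sbt$ and $\Sbh^T\Sbh$ unchanged, is precisely the bookkeeping the paper leaves implicit.
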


The benefit of the proposed weighted procedure, is that the weights allow further compression of the data matrix $\Xb$; without affecting the recovery of the gradient. If $\{\pi_i\}_{i=1}^N$ and $\{\Pi_i\}_{i=1}^K$ are not close to being uniform, $\Sbh(\Xb-\yb)$ could have significantly fewer rows than $\Sbt(\Xb-\yb)$. Under the conditions of theorem \ref{eq_grad_thm}, the proposed matrix $\Sbh$ is applicable to the sketching procedures in \cite{MMY15,Mah16,Wan15,Woo14}.

A convergence result can also be established using \cite{Mah16} theorem 10. In particular, under appropriate assumptions, after $O(1/\varepsilon)$ iterations with termination criterion $\|\tilde{g}^{(t)}\|_2\leq\varepsilon$, the proposed weighted gradient coding procedure applied to (\ref{loss_sketched}) produces an $\varepsilon$-approximation to $\theta^{\star}_{ols}$.


\section{Experiments}
\label{experiments}

\subsection{Binary Classification of MNIST}
\label{bin_cl}

For dataset partitions $k=20$, number of workers $n=50$, parts per worker $w=12$, and allocation of each part to distinct worker $d=30$, we trained a logistic regression model by applying gradient descent with the proposed method. The number of training samples was $N=10000$ and the procedure was tested on 1791 samples, for classifying images of four and nine from MNIST, of dimension $p=784$. The table below shows averaged results over six runs while varying $\rho$, when the weights were introduced and when they were not.
\begin{center}
\begin{tabular}{ll}
\begin{tabular}{ |p{.3cm}||p{.8cm}|p{.8cm}| }
\hline
\multicolumn{3}{|c|}{With weights} \\
\hline
$\rho$ & Error & Iter.\\
\hline
$4$ & $7.09\%$ & 25.5 \\
$10$ & $7.78\%$ & 14.67 \\
$20$ & $8.17\%$ & 12.5 \\
\hline
\end{tabular}
\hspace{5mm}
\begin{tabular}{ |p{.3cm}||p{.8cm}|p{.8cm}| }
\hline
\multicolumn{3}{|c|}{Without weights} \\
\hline
$\rho$ & Error & Iter.\\
\hline
$4$ & $9.08\%$ & 24.67 \\
$10$ & $8.32\%$ & 14.67 \\
$20$ & $8.5\%$ & 12.67 \\
\hline
\end{tabular}
\end{tabular}
\end{center}
Without any compression there was an average error of $4.37\%$. For $\rho=4,10$; $\wb$ becomes non-uniform and weighting results in better classification accuracy. As $\rho$ decreases, the distribution $\{\Pi_i\}_{i=1}^K$ becomes closer to uniform, leading to reduced advantage in using weighted gradient coding.

\subsection{Linear Regression}
\label{lr_random_data}

We retain the same setting of $k=20,n=50,w=12$ and $d=30$ from \ref{bin_cl}, and generate random data matrices with varying leverage scores as follows for the proposed weighted coding procedure. For all $i\in[20]$ we generate $\Xb_i\in\Z^{50\times 20}$ random matrices with entries from Uni$(-15i,15i)$, concatenate them and shuffle the rows to form $\Xb\in\Z^{1000\times 20}$. We then select an arbitrary $\yb\in\text{im}(\Xb)$, and add standard Gaussian noise $\vec{\epsilon}$. We ran experiments to compare the proposed weighted procedure for solving (\ref{wls_formulation}), for a fixed gradient descent step-size of $\alpha_t=10^{-7}$, with the same termination criterion $\|\nabla_{\theta}L_{\text{S}}(\Sbh,\D;\theta^{(t)})\|_2<0.1$ over 20 runs and error measure $\left\|\tilde{\theta}_{ols}-\Xb^{\dagger}\left(\yb+\vec{\epsilon}\right)\right\|_2^2$.
\begin{center}
\begin{tabular}{ |p{.4cm}||p{1.62cm}|p{1.62cm}|p{1.3cm}| }
\hline
\multicolumn{4}{|c|}{Average Number of Iterations and Error} \\
\hline
 $\rho$ & Weighted & Unweighted  & Error\\
\hline
$2$ & $107.55$ & $145.6$ & $O(10^{-5})$ \\
$4$ & $75.3$ & $84.4$ & $O(10^{-4})$ \\
\hline
\end{tabular}
\end{center}
As in \ref{bin_cl}, for lower $\rho$ the proposed weighted gradient coding approach achieves the same order of error as the unweighted, in fewer gradient descent iterations (on average).

\begin{figure}
  \centering
  \label{conv_wts}
    \includegraphics[scale=.08]{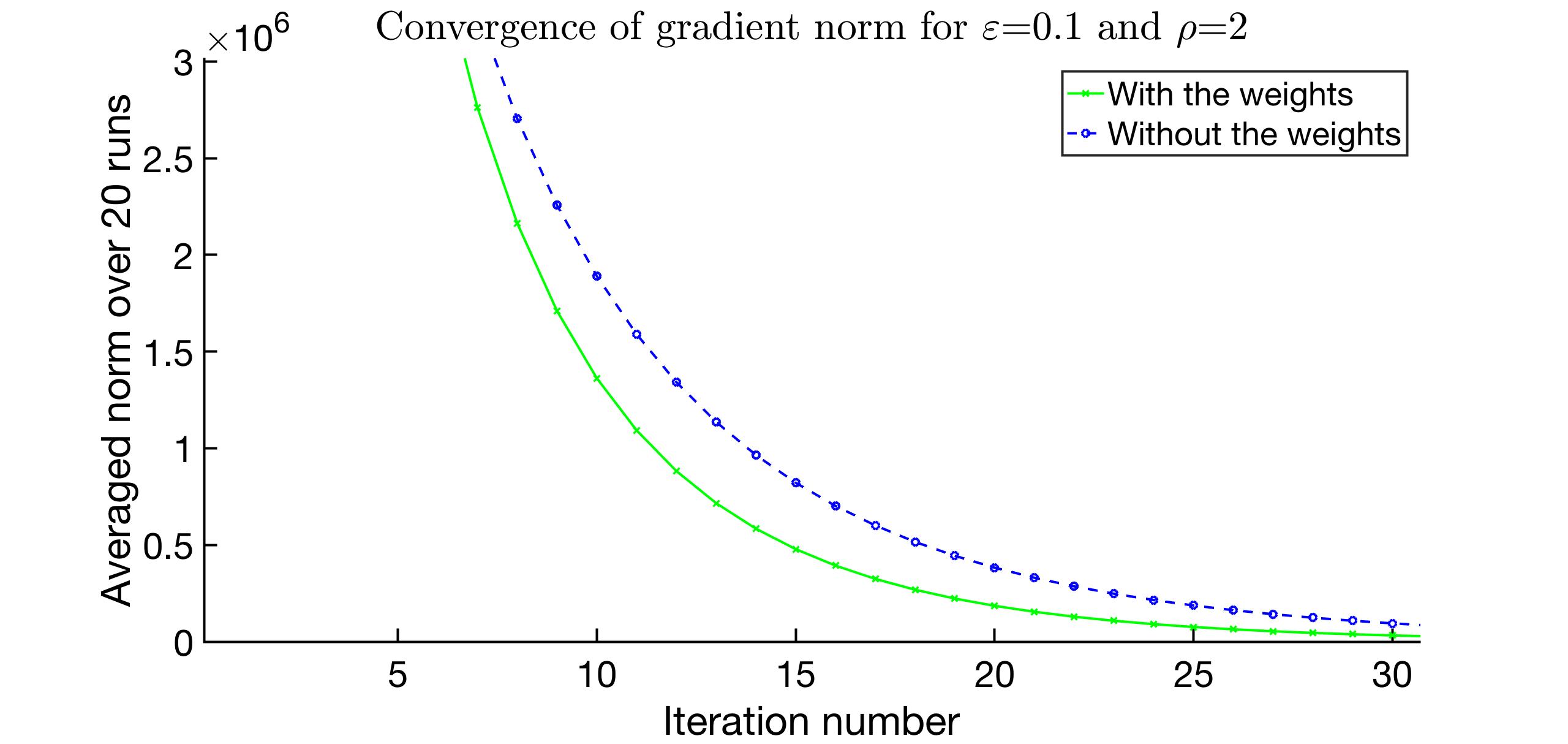}
    \caption{Convergence of the gradient norm in \ref{lr_random_data} for $\rho=2$, with and without the weights, where the norm at each iteration is an average of 20 different runs.}
\end{figure}

In Figure 2 we demonstrate the benefit of weighted versus unweighted. Even though the weights introduce a much higher gradient norm at first, it drops much faster and the termination criterion is met in fewer iterations.

\section{Acknowledgement}
This work was partially supported by grant ARO W911NF-15-1-0479.





\bibliographystyle{unsrt}
\bibliography{refs.bib}

\end{document}